\documentclass[10pt, a4paper]{amsart}

\usepackage{latexsym}
\usepackage[all]{xy}
\usepackage{color}
\newtheorem{teorema}{Theorem}[section]
\newtheorem{definicion}[teorema]{Definition}
\include{amslatex}
\newtheorem{proposicion}[teorema]{Proposition}

\newtheorem{corolario}[teorema]{Corollary}

\newtheorem{ejemplo}[teorema]{Example}

\newtheorem*{theorem-A}{Theorem A}
\newtheorem*{theorem-B}{Theorem B}
\newtheorem*{theorem-C}{Theorem C}
\newtheorem*{theorem-D}{Theorem [Akdar-Zadeh]}

\numberwithin{equation}{section}
\begin{document}

\begin{title}[Generalized Berwald spacetimes and the equivalence principle]{On singular generalized Berwald spacetimes and the equivalence principle}
\end{title}
\maketitle
\author{

\maketitle
\begin{center}
\author{Ricardo Gallego Torrom\'e\footnote{email: rigato39@gmail.com}}
\end{center}
\address{Departamento de Matem\'atica, Universidade Federal de S\~ao Carlos, Rodovia Washington Lu\'is, km 235, SP, Brazil}
\begin{abstract}
{\small The notion of singular generalized Finsler spacetime and singular generalized Berwald spacetime are introduced and their relevance for the description of classical gravity discussed. A  method to construct examples of such generalized Berwald spacetimes is sketched. The method is applied at two different levels of generality. First, a class of flat, singular generalized Berwald spacetimes is obtained.  Then in an attempt of further generalization, a class of non-flat generalized Berwald spacetimes is presented and the associated Einstein field equations are discussed. In this context, an argument in favour of a small value of the cosmological constant is given. The physical significance of the  models is briefly discussed in the last section.}
\end{abstract}

\section{Introduction}
Finsler spacetimes models are natural generalizations of Lorentzian metric theories of gravity. There are several reasons supporting such naturalness. Given a Finsler spacetime, a re-parametrisation invariant parameter that generalizes the notion of proper time in relativistic spacetimes and that also satisfies Einstein's clock hypothesis can be associated to each timelike curve. Therefore, they  provide a natural mathematical representation for the notion of ideal co-moving clock associated to a classical observer co-moving with the point particle \cite{Syngespecial1972}. From a more technical point of view, Finsler spacetimes are certainly of relevance, since fundamental techniques and notions from Riemannian and Lorentzian geometry can be generalized to the Finslerian case quite straightforwardly \cite{BaoChernShen2000}. In particular, in Finsler spacetimes there is a notion of geodesic as solution of the auto-parallel curves of a particular linear connection determined by the metric structure and there is a notion of curvature associated to such connection. There is also a  Lagrangian which is preserved along such auto-parallel curves.

 Although the richness of the Finsler category is an advantage for the many different applications in mathematics, physics and in other natural sciences and in engineering, such large variety of possible Finsler models originates a serious difficulty when the theory is applied to the description of the classical gravitational interaction. This is because for a generic four-dimensional manifold there are many more admissible Finsler spacetime structures than Lorentzian structures. For instance, one can consider small linear deformations of an original Lorentzian structure \cite{Randers}. Therefore, to have a natural, physically motivated criteria to identify the relevant Finsler spacetime structures is certainly of importance in the construction of  Finslerian models of gravity.

A possibility to solve this situation is to consider Berwald structures as the natural setting for Finslerian gravity. They  are  Finsler spacetimes whose associated linear Berwald connection is affine \cite{Bao2007}. Berwald spacetimes have been considered in the physics literature in connection with cosmological models \cite{Li}, in the discussion of higher order deviation equations \cite{Gallego Torrome Gratus} and also in connection with modified theories of relativity \cite{FusterPabst}. Berwald spacetimes are more similar to Lorentzian spacetimes than a generic Finsler spacetime in  the sense that smooth free falling coordinates systems, an essential requirement for the formulation of the equivalence principle in the Einstein or the strong form \cite{ThorneLeeLightman}, are possible to be constructed.

 Partially based on the above arguments, we postulate that in the category of Finsler spacetimes, it is natural that the following properties hold good:
 \begin{itemize}

 \item There is a natural notion for {\it timelike}, {\it lighlike} and {\it spacelike} curves,

 \item There is a definition of re-parametrization invariant natural time for {\it timelike curves},

 \item Einstein's clock hypothesis holds good,

 \item There is a linear connection determined by the metric structure and natural constraints,

 \item  There is a definition of geodesic as auto-parallel curves of a linear connection,

 \item There is a Lagrangian function which is constant along each parameterized auto-parallel curves of the connection,

 \item There is a well-defined theory of curvature associated to the connection,

 \item There are smooth local Fermi coordinate systems where the connection coefficients along a given geodesic are zero.

 \end{itemize}
No all these properties are required to be satisfied in a generic geometric spacetime theory. For instance, there are models where the geodesics are not auto-parallel curves of a linear connection (for instance, tele-parallel models of gravity). However,  the above properties are assumed to be satisfied in the spacetime models that in this work are considered, which are of Finsler type or direct generalizations. 

The above properties are satisfied for Berwald spacetimes.
However,  these properties are also satisfied in the larger class of {\it generalized Berwald spacetimes}, a notion which is closely related with the notion of deformed Finsler spaces \cite{Bucataru Miron, Vacaru Stavrinos} of Berwald type. In a Finsler spacetime the geometry is determined by a Finsler or Lagrangian function. Indeed, the {\it fundamental tensor} or {\it metric} is the {\it vertical Hessian} of the Lagrangian. In contrast, in a generalized Finsler spacetime the fundamental object is a covariant, symmetric non-degenerate $d$-tensor, namely, the fundamental tensor or metric. There is also associated a Lagrangian, but the metric is not the Hessian of the Lagrangian. On the other hand, a Berwald structure (of Euclidean or Lorentzian signature; Finslerian or Lagrangian; proper or generalized) is a structure characterized by the fact that a given connection is basically an affine connection. Generalized Finsler spacetimes of the Berwald type have the above  postulated properties and supported by this, we argue that the relevant generalized Finsler spacetimes for applications in gravity are within  the class of {\it generalized Berwald spacetimes}.  We will describe examples of generalizations of general Relativity that fall in such class that could have relevance in physics.

 The structure of this paper is the following. In {\it section 2}, a very short introduction to the notion of (singular) generalized Finsler spacetimes is presented. After this and motivated by their relation with the existence of {\it smooth free falling local coordinate systems} in the form of smooth Fermi coordinates, the notion of generalized Berwald spacetime is introduced. In {\it section 4},  a family of flat generalized  Berwald spacetimes is discussed, as  summarized in {\bf Theorem A}.  Then in {\it section 5} we outline a  method to extend the theory to obtain non-flat, generalized Berwald spacetimes. This leads us  in a straightforward way to {\bf Theorem B}. Furthermore, it is shown that the metrics considered in this paper satisfy the Einstein field equations in the case that there is no cosmological constant term.  If a non-zero cosmological constant term is included in the field equation, it is argued why the constant must be very small. In the last {\it section} the physical significance of our metrics is briefly discussed.

\section{Generalized Finsler spacetimes}
\subsection*{Notation}In this paper $M$ is a smooth four-dimensional spacetime manifold. $TM$ is its tangent bundle.  Greek letter indices stand for spacetime indices and run from $0$ to $3$. In the few cases that they will appear, Latin indices will run only from $1$ to $3$ and will correspond to spacelike directions in a particular $1+3$ decomposition product of the spacetime manifold. Repeated up and down indices are understood as summed from $0$ to $3$, if anything else is not stated. $\eta$ is the Minkowski metric in the special case when the spacetime  manifold $M$ is diffeomorphic to $\mathbb{R}^4$, while $h$ is a generic Lorentzian metric defined on $M$.  $^h\nabla$ is the Levi-Civita connection of $h$. We assume the standard definition of linear connection in a vector bundle $\pi_{\mathcal{E}}:\mathcal{E}\to M$  \cite{KobayashiNomizuI}, while the notion of affine connection that we adopt is that of a linear connection in the tangent bundle $\pi:TM\to M$. The definition of endomorphism curvatures follow also reference \cite{KobayashiNomizuI}. The main references that we follow for Finsler geometry are \cite{Bao2007} for the theory of linear Berwald connections and Berwald spaces and \cite{BaoChernShen2000} for general notions and results on Finsler geometry. The main reference that we follow for generalized Finsler spacetimes is \cite{Bucataru Miron}, from where the notion of $d$-tensor is taken.
Note that we have simplified the formulation of the curvature. The exact definition of curvature can be found in  \cite{Bao2007} (\cite{GallegoPiccioneVitorio:2012} for the analogous theory of the related Chern's connection) and requires the introduction of several canonical projections on the pull-back bundle $\pi^* TM\to\tilde{N}$. Thus the Berwald connection of a Berwald space is identified almost directly with the corresponding affine connection that it determines.

\subsection*{Finsler spacetimes}
In this work a weak version of Beem's definition  of Finsler spacetime \cite{Beem1970} to the category of generalized Finsler spaces  is adopted. This allows  us to consider generalized Finsler spacetimes with differential singularities of the metrics over certain sub-manifolds of $TM$. The domain of regularity of the metric is a sub-manifold $\tilde{N}\subset TM\setminus \{0\}$ such that the restriction to $\tilde{N}$ of the canonical projection $\tilde{\pi}:\tilde{N}\to M$ determines a  sub-bundle of $TM$ over $M$ and the fiber over $x\in M$ is $\tilde{N}_x:=\tilde{\pi}^{-1}(x)$. The notion of $d$-tensor is taken as in \cite{Bucataru Miron}.
\begin{definicion}
A singular generalized Finsler spacetime is a triplet $(M,\tilde{N},g)$ such that:
\begin{itemize}
\item $M$ is a four dimensional, Haussdorff, real, $C^{\infty}$-smooth  manifold.

\item $\tilde{\pi}:\tilde{N}:=\,TM\setminus (S\cup\{0\}) \to M$ is a fibre bundle such that the fiber $\tilde{N}_x$ is an open cone for each $x\in M$.

\item The {\it fundamental tensor} $g$ is a real, smooth, symmetric $d$-tensor defined on $\tilde{N}$ such that
\begin{itemize}
\item Its components $g_{\mu\nu}(x,\dot{x})$ are $0$-homogeneous in the local velocity coordinates $\dot{x}^\mu$,
\begin{align}
g_{\mu\nu}(x,\lambda\dot{x})=\,g_{\mu\nu}(x,\dot{x}),\quad \forall\, \lambda\in [0,\infty[.
\label{homogeneouscondition}
\end{align}
\item The matrix formed by the components $g_{\mu\nu}(x,\dot{x})$ is non-degenerate and with signature $(-,+,+,+)$ at each point $(x,\dot{x})\in\, \tilde{N}$.

\item For each point $x\in M$, the {\it null set}
\begin{align*}
^g{Con}_x:=\{(x,\dot{x})\in \tilde{N}_x\,|\,g_{\mu\nu}(x,\dot{x})\dot{x}^\mu\dot{x}^\nu=0\}
\end{align*}
 is a regular  open hypersurface of $\tilde{N}_x$ with only two connected components.
\end{itemize}
\end{itemize}
\label{Finslerspacetime}
\end{definicion}
The Lagrangian associated to the $d$-tensor of a tangent vector $\dot{x}\in\,T_xM$ is defined by the expression
\begin{align}
L(x,\dot{x})=\,g_{\mu\nu}(x,\dot{x})\dot{x}^\mu\dot{x}^\nu.
\label{definicion de norma}
\end{align}
Note that for a generic generalized Finsler spacetime the fundamental tensor $g_{\mu\nu}(x,\dot{x})$ is not the vertical Hessian of a given Finsler function, in particular for a generic generalized Finsler spacetime $g_{\mu\nu}\neq \frac{\partial^2 L}{\partial \dot{x}^\mu\dot{x}^\nu}$. On the other hand, if the equality $g_{\mu\nu}= \frac{\partial^2 L}{\partial \dot{x}^\mu\dot{x}^\nu}$ holds good, then the spacetime  determined by $g$ is a genuine Finsler spacetime. In this sense and apart from the causality and regularity conditions,  our {\it definition} is a particular case of generalized Lagrange spacetime and such that the homogeneity condition, characteristic of Finslerian structures, is imposed.

 {\it Definition} \ref{Finslerspacetime} allows for differentiable singularities of $g$ in domains of each slit tangent space $T_xM\setminus 0$. The region where the function is not regular is $S=\,\{S_x,\,x\in M\}$, where each $S_x$  contains the {\it singularity loci} of $g$ at $x\in M$ as well as the points $\dot{x}\in\,T_xM$ where $g_{\mu\nu}(x,\dot{x})$ are not defined real. The metrics that we will discuss in this paper contain differential singularities in the fundamental tensor $g$.

A vector $\dot{x}\in\, T_xM$ is either: 1. {\it Timelike} if $L(x,\dot{x})<0$, 2.  {\it Spacelike} if $L(x,\dot{x})>0$ or 3. {\it Lightlike} if $L(x,\dot{x})=0$.
 If there exists a smooth vector field $T\in \,\Gamma \tilde{N}$ such that at each point $x\in M$ the vector $T(x)$ is {timelike}, then there is defined a {\it time orientation}. In this case, a tangent vector $\dot{x}\in\,T_xM$ is future pointed if
 \begin{align*}
 g_{\mu\nu}(x,\dot{x})\,\dot{x}^\mu\,T^{\nu}(x)<0.
 \end{align*}
 Timelike curves  are smooth curves   such that the condition
\begin{align*}
L\left(x(t),\frac{dx(t)}{dt}\right)<0
\end{align*}
holds good along ${\bf x}:I\to M$.
Finslerian {\it lightlike} curves can also be defined. They are curves such that the condition
\begin{align*}
L\left(x(t),\frac{dx(t)}{dt}\right)=0
 \end{align*}
holds good along  ${\bf x}:I \to M$.

The Cartan tensor $C$ of the fundamental tensor $g$ is defined by its components $C_{\mu\nu\rho}$, which in terms of natural coordinates on $TM$ are given by the expression
\begin{align}
 C_{\mu\nu\rho}=\,\frac{1}{2}\frac{\partial}{\partial \dot{x}^\mu}\, g_{\nu\rho},\quad \mu,\nu,\rho =0,1,2,3.
\end{align}
By the zero $0$-homogeneity property of the fundamental tensor $g_{\nu \sigma}$ and by application of Euler's theorem of homogeneous functions, it follows the relation
\begin{align*}
\dot{x}^\mu\,C_{\mu\nu\rho}(x,\dot{x})=\,\frac{1}{2}\,\dot{x}^\mu\,\frac{\partial}{\partial \dot{x}^\mu}g_{\nu\rho}(x,\dot{x})=0.
\end{align*}
There are two further properties of the Cartan tensor that we shall mention. The first is that the structure determined by $g$ is Lorentzian iff the Cartan tensor is zero, by the same reason than in the Euclidean case \cite{BaoChernShen2000}. The second is that the structure $(M,\tilde{N},g)$ is a genuine Finsler spacetime, that is, the fundamental tensor $g$ is the vertical Hessian of a Lagrangian iff the Cartan tensor is totally symmetric \cite{Matsumoto, Bucataru Miron}.

Given a timelike curve ${\bf x}:I\to M$, a re-parametrisation invariant  parameter is defined by the expression
\begin{align}
\tau[{\bf x}](t,t_0):=\int^t_{t_0}\,\sqrt{-L\left(x(s),\frac{dx(s)}{ds}\right)}\,ds.
\label{propertime}
\end{align}
We call $\tau[{\bf x}]$ the {\it natural time parameter} of the Finsler metric $g$ or proper time parameter of $g$.
Note that the parameter $s$ has not been fixed to be a particular parameter, neither it is necessary to specify a dynamical law for the curves ${\bf x}:I\to M$.

We can compare the definition of natural time parameter \eqref{propertime} and the standard definition of proper time in general relativity. Given a relativistic spacetime $(M,h)$, the proper time along the timelike curve ${\bf x}:I\to M$  is given by the expression
\begin{align}
\tau_h[{\bf x}](t,t_0):=\,\int^t_{t_0}\,\sqrt{-h_{00}(x(t))\,\dot{x}^0\dot{x}^0}\,dt
\label{propertimeforh}
\end{align}
in a co-moving reference frame where $\dot{x}^i=0,\,i=1,2,3$. This parameter $\tau_h(t)$ is the interval of time that  an ideal clock co-moving with a particle along the world line ${\bf x}:I \to M$ measures. It is direct that the expression \eqref{propertimeforh} can be re-written in invariant form under local coordinate transformations,
\begin{align}
\tau_h[{\bf x}](t,t_0):=\,\int^t_{t_0}\,\sqrt{-h(\dot{x},\dot{x})}\,dt.
\label{relativisticpropertime}
\end{align}
Therefore, the definition of natural time parameter \eqref{propertime} of a generalized Finsler spacetime contains the relativistic notion of proper time \eqref{relativisticpropertime} as a particular case when $g$ corresponds to the Lorentzian metric $h$.

A fundamental difference between genuine Finsler spacetimes and generalized Finsler spacetimes comes as follows. Let us consider the Euler-Lagrange equations
\begin{align*}
\frac{d}{d\tau}\frac{\partial L}{\partial \dot{x}^\mu}-\,\frac{\partial L}{\partial x^\mu}=0,\quad \mu=0,1,2,3,
\end{align*}
for the energy functional $E$,
\begin{align}
E[{\bf x}]=\,\int_{I}\,d\tau\,L(x,\dot{x}).
\label{Energy functional}
\end{align}
For the Lagrangian \eqref{definicion de norma}, the Euler-Lagrange  are the differential equations
\begin{align*}
\frac{d}{dt}\left(\frac{\partial}{\partial \dot{x}^\mu}\,g_{\rho\nu}(x,\dot{x})\right)\,\dot{x}^\rho\dot{x}^\nu+\,2\,g_{\mu\nu}(x,\dot{x})\,\ddot{x}^\nu\,-\frac{\partial}{\partial x^\mu}\,g_{\rho\nu}(x,\dot{x})\dot{x}^\rho\dot{x}^\nu=0.
\end{align*}
If one introduces the Cartan tensor, these equations are re-written as
\begin{align}
2\,\dot{x}^\rho\dot{x}^\nu\,\frac{d}{dt}A_{\mu\rho\nu}(x,\dot{x})\,+2\,g_{\mu\nu}(x,\dot{x})\,\ddot{x}^\nu\,-\frac{\partial}{\partial x^\mu}\,g_{\rho\nu}(x,\dot{x})\dot{x}^\rho\dot{x}^\nu=0,\quad \mu,\nu,\rho =0,1,2,3.
\label{Euler Lagrange of E}
\end{align}
This equations can be compared with the usual expression obtained for a Finsler spacetime, that due to the totally symmetry of the Cartan tensor and the $0$-homogeneity of the fundamental tensor $g$, is of the form
\begin{align}
2\,g_{\mu\nu}(x,\dot{x})\,\ddot{x}^\nu-\frac{\partial}{\partial x^\mu}\,g_{\rho\nu}(x,\dot{x})\dot{x}^\rho\dot{x}^\nu=0,\quad \mu,\nu,\rho =0,1,2,3.
\label{Euler Lagrange of a Finsler spacetime}
\end{align}
Using the non-degeneracy and symmetry of $g$, the equations \eqref{Euler Lagrange of a Finsler spacetime}  are re-cast as
\begin{align}
\ddot{x}^\mu+\,\gamma^\mu\,_{\nu\rho}(x,\dot{x})\dot{x}^\nu\,\dot{x}^\rho=0,\quad \mu,\nu,\rho =0,1,2,3,
\label{geodesic equations}
\end{align}
where $\gamma^\mu\,_{\nu\rho}(x,\dot{x})$ are the Christoffel symbols of $g$,
 \begin{align}
\gamma^\mu\,_{\nu\rho}[g]:=\,\frac{1}{2}\,g^{\mu\sigma}\,\left(\frac{\partial g_{\sigma\rho}}{\partial x^\nu}\,+\frac{\partial g_{\nu\sigma}}{\partial x^\rho}\,-\frac{\partial g_{\nu\rho}}{\partial x^\sigma}\right).
 \label{Christoffel symbols of g}
 \end{align}
We observe that an extra term appears in the Euler-Lagrange equations \eqref{Euler Lagrange of E} of a generalized Finsler spacetime respect to the usual expression  of the Euler-Lagrange equations \eqref{Euler Lagrange of a Finsler spacetime} of a genuine Finsler spacetime, which are equivalent to the  equations \eqref{geodesic equations}. The appearance of this extra term is the main difference between genuine Finsler spacetimes and generalized Finsler spacetimes.

 In order to formulate the theory of generalized Finsler spacetimes in a general covariant way it is necessary a connection. In our case, we choose to work with the {\it linear Berwald connection} $^b\nabla$, which is a  linear, {\it symmetric} connection on the pull-back bundle $\pi^* TM$ over $\tilde{N}$. The details of the construction are analogous to the construction in the Euclidean case \cite{Bao2007}. The linear connection $ ^b\nabla$ is characterized by: 1. Being a symmetric and 2. The auto-parallel curves of the connection, that is, the curves on $M$ such that
\begin{align}
^b\nabla_{\dot{x}}\,\dot{x}=0.
\label{geodesic equation 2}
\end{align}
coincide with the solutions \eqref{geodesic equations}.
We call the solutions of this differential equation the {\it parameterized geodesics} of $g$. One can shown by a simple argument in local coordinates that if the non-zero connection coefficients are given by \eqref{Christoffel symbols of g}, then the equation \eqref{geodesic equation 2} is equivalent to the equation \eqref{geodesic equations}.

Note that in the class of generalized Finsler spacetimes there is a radical distinction between the auto-parallel condition \eqref{geodesic equation 2} and the Euler-Lagrange equation \eqref{Euler Lagrange of E}. The motivation for our choice of equation \eqref{geodesic equation 2}  as the geodesic equation is partially discussed in {\it Section 3}, where we will show that our notion is compatible with the existence of smooth local Fermi coordinates and normal coordinates on $M$  in a special class of generalized Finsler spacetimes.

It is direct from the definition of this connection that
\begin{align}
^b\nabla g=0.
\label{covariant conservation of g}
\end{align}
Moreover, along any solution of the geodesic equation  \eqref{geodesic equation 2}
\begin{align*}
^b\nabla g(\dot{x},\dot{x})=\,^b\nabla g(\dot{x},\dot{x})\,+2\,g(\,^b\nabla_{\dot{x}}\dot{x},\dot{x})=0,
\end{align*}
which implies that the Lagrangian $L(x,\dot{x})$ is preserved,
\begin{align}
\frac{d}{dt} \,L(x,\dot{x})=\,\frac{d}{dt}\,(g(\dot{x},\dot{x}))=0.
\label{covariant conservation of L}
\end{align}

For a generic generalized Finsler spacetime the connection coefficients of $^b\nabla$ depend on the points $(x,\dot{x})\in \tilde{N}$ except in the very special case of Finsler spacetimes of Berwald type, to be defined in the following {\it section}, in which case they only depend on the spacetime point $x\in M$.

The last technical notion that we need to introduce to complete our geometric tool-kit  is the curvature of the connection $^b\nabla$. The Riemannian curvature endomorphisms of the linear Berwald connection are given by the expression
\begin{align}
R_g(X,Y)Z=\,^b\nabla_X\,^b\nabla_Y Z-\,^b\nabla_X\,^b\nabla_Y Z-\,^b\nabla_{[X,Y]}Z,\quad X,Y,Z\in \,\Gamma \tilde{N}.
\label{curvature endomorphism}
\end{align}

\section{Free falling local coordinate systems, gravity and generalized Berwald spacetimes}
 The existence of smooth free falling local coordinate systems is not a necessary formal requirement for the formulation of the universality of free fall or weak equivalence principle (WEP), that only requires the universality of the geodesic equation, if the geodesics are  associated to the world lines of free falling test particles. This point is illustrated with the following example.
 \begin{ejemplo}
Let us consider a {\it toy universe} where the {\it geodesics} are the solutions of the Lorentz force equation and the physical systems under consideration are constrained  to have the same quotient $q/m$ between charge and mass.  In such {\it toy universe}, even if the associated linear Berwald connection is not affine \cite{Ricardo2014}, there is universality of the connection coefficients and the associated proper time functional does not depend on the particular specie of particle characterized by the pair $(q,m)$. The conjunction of these two facts can be interpreted as { universality of free fall in this toy universe}, that is, WEP holds in such universe. However, since the Lorentz force equation is geometrized in the form of a geodesic equation of a Randers spacetime \cite{Randers}, this example shows that in the  finslerian category the WEP does not imply the Einstein Equivalence Principle \cite{ThorneLeeLightman}. Indeed, for a generic Randers spacetime it is not possible to formulate Einstein's version of the equivalence principle, since for a generic Randers spacetime it is impossible to define smooth free falling local coordinate systems where all the gravitational effects, due for instance to the connection, are eliminated in small enough region. For instance, gravitational effects appear in the geodesic equation in the piece proportional to the  connection coefficients and for a general Randers spacetime, cannot be eliminated along arbitrary geodesics \cite{BaoChernShen2000}.
 \end{ejemplo}

The existence of smooth local free fall coordinate systems is required for the formulation of the Einstein equivalence principle (EEP) and other stronger versions of the equivalence principle, but it is also essential in the definition of gravity as an interaction that can be eliminated in the description of test systems, for sufficient small macroscopic domains.
  It is in smooth free falling local coordinate systems that very specific {\it dynamical effects}, namely, effects that do not depend on the specific characteristics of the test body, can be eliminated in a small enough region of spacetime in an uniformly wa, independently of the state of motion of the test particles or test systems. By elimination of dynamical effects we mean that test particles will move as free particles in the sense that they world lines are represented by straight lines in such free fall local coordinate systems.  Hence any deviation from free motion will indicate the existence of another field different than the one acting in an universal way on test particles. Since the free falling local coordinate systems are smooth,  absolute tensor calculus makes not necessary to specify experimentally such coordinate systems in the problem of evaluation and contrast with experience of scalar observable functions.  Here the condition of smoothness is essential for the mathematical consistency of the tensorial equations under local coordinate transformations. At least $\mathcal{C}^2$-smoothness regularity is need for the theory, since the covariant dynamical equations are constructed using a connection. This scheme of things is our definition of gravity as a geometric phenomena and also provides a methodology of how to {\it measure} the action of gravity.

    In order that the above characterization of gravity could work in practice, smooth free falling local reference frames and the definition of the associated local coordinate systems should not depend on the relative state of motion of the test particle. To show why this must be the case, let us consider the opposite possibility  and let us assume that  free fall local coordinate systems where gravitational effects are eliminated depend on the state of motion of the point test particle with world line ${\bf x}:I\to M$. Let us consider another test particle with world line $\tilde{\bf x}:I\to M$ in relative motion respect to the particle with world line ${\bf x}:I \to M$. In the free fall coordinate system determined by ${\bf x}$ the particle following $\tilde{\bf x}$ will not be in free motion, because in such coordinate system the gravitational field has not been eliminated in its action along the world line $\tilde{\bf x}:I \to M$ but only in its action along the world line ${\bf x}:I\to M$. That is, there will be detectable effects associated to gravity at arbitrary small distances in the coordinate system free falling coordinate systems co-moving with ${\bf x}$. This implies the appearance of a logical ambiguity in the interpretation of the deviation of free motion of the particle with world line $\tilde{\bf x}$  between: 1. The possibility that gravity acts along $\tilde{\bf x}:I\to M$ in a way that depends on its specific characteristics  and the particular state of motion respect to another particle with world line ${\bf x}:I\to M$ and 2. The possibility  that an additional field is acting on the particle $\tilde{\bf x}$. This situation is in contradiction with the characterization of gravity discussed above.

Based on the above argument, in this paper we adopt the assumption that the differential structure of $M$ admits smooth free falling local coordinate systems. A natural way to implement mathematically that notion for a given generalized Finsler spacetime is to identify these coordinate systems  with {\it Fermi coordinate systems}.  Fermi coordinates are characterized by the fact that the Christoffel symbols of the fundamental tensor $g$  are zero along a given auto-parallel curve ${\bf x}:I\to M$. Therefore, we consider the following
 \begin{definicion}
 Let $(M,\tilde{N},g)$ be a generalized Finsler spacetime and ${\bf x}:I\to M$ be an auto-parallel curve of the linear Berwald connection $^b\nabla$ of the fundamental tensor $g$. A  free falling local coordinate system along ${\bf x}:I\to M$ is a Fermi coordinate system of $^b\nabla$.
 \label{freefallcoordinatesystem}
 \end{definicion}
 By local coordinates we mean coordinates on $M$. Since $^b\nabla$ is a connection defined action on sections over $\tilde{N}$ and not on $M$, the above definition seems at least ambiguous, except if we consider natural coordinates on $\tilde{N}$, induced from local coordinates on $M$. In this case, a free falling local coordinate systems is a coordinate system on $M$ that induces a natural coordinate system \cite{BaoChernShen2000} on $\tilde{N}$.

  The existence of local smooth Fermi coordinate systems imposes restrictions on the atlas structure of $M$ to make the theory compatible with the existence of the $d$-tensor $g$ defined on $\tilde{N}$. In Fermi coordinate systems, by definition, the Christoffel symbols $\gamma^\mu\,_{\nu\rho}$ calculated with $g$ vanish along the curve ${\bf x}:I\to M$. Then it can be shown by direct computation that, admitting $\mathcal{C}^2$-smoothness in the transition functions associated to the local coordinate transformations, in any other coordinate system the associated Christoffel symbols $\tilde{\gamma}^\mu\,_{\nu\rho}$ of  $g$  are defined over the spacetime manifold $M$. Therefore, the Christoffel symbols determine an affine, symmetric connection on $M$. Conversely, the construction of smooth local Fermi coordinates can be done along auto-parallel curves of any affine, symmetric connection (see for instance the argument given in \cite{Perlick2007}). It is this property, together with the covariant preservation of the Lagrangian, namely, equation \eqref{covariant conservation of L}, what justifies the adoption of the notion of geodesic as auto-parallel curves of an affine, symmetric Berwald connection $^b\nabla$.

  The existence of such type of connection for a given Finsler spacetime is one of the characterizations of Berwald spaces \cite{Bao2007}, that we adopt here as our definition of generalized Berwald spacetime,
 \begin{definicion}
 $(M,\tilde{N},g)$ is a generalized Berwald spacetime iff corresponding linear Berwald connection is the pull-back connection on $\pi^*TM$ of an affine connection on $M$.
  \end{definicion}
Our definition requires the notion of pull-back connection of an affine connection. Details of such constructions with applications to Berwald spaces of Euclidean signature can be found in \cite{Gallego Torrome Etayo}, but basically are equivalent to the characterization of $^b\nabla$ by the fact that it is symmetric and the connection coefficients live on $M$. For Lorentzian signature, the notion of pull-back connection remains the same.
  For positive definite Finsler metrics this characterization of Berwald space is well known (see \cite{Bao2007} or \cite{BaoChernShen2000}, Chap. 10 for the analogous characterization using the Chern connection).
  The translation of several other notions and results from the theory of Berwald spaces with Euclidean signature to the theory of generalized Berwald spacetimes of Lorentzian signature requires no major changes and will be used without proof. We refer to the interested reader to  \cite{Bao2007,BaoChernShen2000, Bucataru Miron, Matsumoto} for details.

  Because generalized Finslerian models also accommodate Einstein's clock hypothesis, the category of generalized Berwald spacetimes is naturally linked with a geometric description of gravitational phenomena. We develop the fundamental elements of a theory of generalized Berwald spacetimes and its application as gravitational models in the following {\it sections} of this paper. However, our theory is not exhaustive and there are examples falling in the category of Berwald spacetimes but laying  outside the applicability of the theory described below, as the ones investigated in \cite{GibbonsGomisPope2007, Li, FusterPabst}.

 \section{Flat generalized Berwald spacetimes}
Let us assume that the fundamental tensor $g$ defining the generalized Finsler spacetime is
\begin{align}
\tilde{g}_{\nu\rho}(\tilde{x},\dot{\tilde{x}})=\,\tilde{\eta}_{\nu\rho}(\tilde{x})\,+\tilde{\epsilon}_{\nu\rho}(\tilde{x},\dot{\tilde{x}})
\label{metricg}
\end{align}
with an  ansatz for the deformation factor $\phi$ of the form
\begin{align}
\epsilon_{\alpha\nu}(\dot{x})=\,\eta_{\alpha\nu}\,\phi(\dot{x}),
\label{antsatzforepsilon}
\end{align}
where $\phi:\tilde{N}\cup\{0\}\to \mathbb{R}^+$ is a positive, real scalar function. The factor $\phi$ only depends on the velocity coordinates $\dot{x}^\mu$, a property which is independent of the natural  coordinate systems on $\tilde{N}\cup\{0\}$ induced from local coordinate systems on $TM$.
 In order to ensure the property of $0$-homogeneity on velocity coordinates of the fundamental tensor $g$, it is necessary for the scalar factor $\phi$ to be a $0$-homogeneous function on the velocity coordinates $\dot{x}^\mu$.
 Hence the scalar field $\phi$ is assumed to be of the form
\begin{align}
\phi\left(\theta,l^2\,[\mathcal{R}_\eta],
\frac{\eta(\dot{x},\dot{x})}{\eta^2(\dot{x},\mathcal{A})},\frac{\eta(\dot{x},\dot{x})}{\eta^2(\dot{x},\mathcal{B})},...\right)
\label{verygeneralformoftheansatz}
\end{align}
 an analytic function on its arguments.
$\mathcal{A},\mathcal{B},...$ are timelike vector fields respect to the metric $\eta$. $[\mathcal{R}_\eta]$ is a short way to denote any possible dependence on curvature scalars formed from $\eta$ (ex.  Ricci scalar). Although for the Minkowski metric $\eta$ the curvatures in $[\mathcal{R}_\eta]$ are identically zero, we still keep this dependence, having in mind  the generalization to non-flat case that we shall consider in the next {\it section}. The parameter $\l$ has physical dimension of length and does not change under re-parameterizations of the time parameter, while the expression $l^2\,[\mathcal{R}_\eta]$ is homogeneous of degree zero in velocity coordinates. $\theta$ denotes all the dimensionless  scalar parameters on which the fundamental tensor $g$  depends on. The dots in the formal dependence of $\phi$ stand for additional $0$-homogeneous arguments. The time parameter $t$ is not fixed to be a particular parameter (ex. proper time of $\eta$ or the proper parameter of the candidate to fundamental tensor $g$), neither it is required that the curve ${\bf x}:I\to M$ satisfies an specific dynamical equation.

By the ansatz \eqref{antsatzforepsilon}, the fundamental tensor $g$ is of the form
\begin{align}
g_{\mu\nu}(\dot{x})=\,\left(1+\,\phi\left(\theta,l^2\,[\mathcal{R}_\eta],
\frac{\eta(\dot{x},\dot{x})}{\eta^2(\dot{x},\mathcal{A})},\frac{\eta(\dot{x},\dot{x})}{\eta^2(\dot{x},\mathcal{B})},...\right)\right)\,\eta_{\mu\nu}.
\label{metricberwaldphi}
\end{align}
By inspection of the corresponding Christoffel symbols we can see that the $d$-tensor \eqref{metricberwaldphi} is a generalized Berwald spacetime, since the connection coefficients of the Berwald connection do not depend on the $\dot{x}$-coordinates (remember that in our ansatz for the flat case, $\phi$ does not depend on $x$-coordinates). Indeed, one  can easily show that, because the form of the ansatz \eqref{metricberwaldphi}, in the local coordinate system where the metric $\eta$ is $diag(-1,1,1,1)$ the Christoffel symbols of $g_{\mu\nu}$ are also zero,
\begin{align*}
\gamma^\mu\,_{\nu\rho}[g]=\,\,\frac{1}{2}\,g^{\mu\sigma}\left(\partial_\nu g_{\sigma\rho}+\,\partial_\rho g_{\nu\sigma}-\partial_\sigma g_{\nu\rho}\right)=\,\frac{1}{2}\,\eta^{\mu\sigma}\left(\partial_\nu \eta_{\sigma\rho}+\,\partial_\rho \eta_{\nu\sigma}-\partial_\sigma \eta_{\nu\rho}\right)=0.
\end{align*}
Hence in any other coordinate system the  connection coefficients depend only on the coordinates of $x\in M$. Indeed, if we perform a change of coordinates on $M$, the Christoffel symbols $\tilde{\gamma}^\mu\,_{\nu\rho}[g]$ of $g$ are given in terms of the new components $\tilde{\eta}^{\mu\sigma}$ by the relation
\begin{align}
\tilde{\gamma}^\mu\,_{\nu\rho}[g]=\,
\tilde{\gamma}^\mu\,_{\nu\rho}[\eta]=\,\frac{1}{2}\,\eta^{\mu\sigma}\left(\partial_\nu \eta_{\sigma\rho}+\,\partial_\rho \eta_{\nu\sigma}-\partial_\sigma \eta_{\nu\rho}\right).
\label{equationforgammas}
\end{align}

Let us consider the regularity properties of the fundamental tensor \eqref{metricberwaldphi}.
In order to ensure the non-degeneracy  of the fundamental tensor $g$ on $\tilde{N}$ it is necessary to impose the condition
\begin{align*}
1+\,\phi\left(\theta,l^2\,[\mathcal{R}_\eta],
\frac{\eta(\dot{x},\dot{x})}{\eta^2(\dot{x},\mathcal{A})},\frac{\eta(\dot{x},\dot{x})}{\eta^2(\dot{x},\mathcal{B})},...\right)\neq 0\quad \forall \, (x,\dot{x})\in \,\tilde{N}.
\end{align*}
To avoid signature changes in $g_{\mu\nu}(x,\dot{x})$, for instance, the possibility of transitions of the form
\begin{align*}
diag(-1,1,1,1)\to \,diag(1,-1,-1,-1)
\end{align*}
 in the diagonal forms of the metrics $g_x$, it is necessary and sufficient to impose the stronger condition
\begin{align}
1+\,\phi\left(\theta,l^2\,[\mathcal{R}_\eta],
\frac{\eta(\dot{x},\dot{x})}{\eta^2(\dot{x},\mathcal{A})},\frac{\eta(\dot{x},\dot{x})}{\eta^2(\dot{x},\mathcal{B})},...\right) >0.
\label{boundforphi}
\end{align}

We make the assumption that the light cones of $\eta$  are embedded in the domain $\tilde{N}$.
If this assumption is adopted, then it is natural to further impose  that the null sets of $g$ coincide with the light cones of $\eta$. Otherwise,  there will be double light cone structures, in contradiction to our {\it definition} of Finsler spacetime \ref{Finslerspacetime} and also a situation which is unnecessary for our goals in this paper.

It is instructive to determine the Cartan tensor of \eqref{metricg} when the ansatz for $\epsilon(x,\dot{x})$ is given by the equation \eqref{antsatzforepsilon}. In this case we have
\begin{align*}
C_{\mu\nu\rho}[g]=\,\eta_{\nu\rho}\frac{\partial \phi_{\theta l\mathcal{A}\mathcal{B}...}}{\partial \dot{x}^\mu}.
\end{align*}
If we denote by $\chi=\eta(\dot{x},\dot{x})$ and by $\Theta_{\mathcal{A}}=\eta(\dot{x},\mathcal{A})$, $\Theta_{\mathcal{B}}=\eta(\dot{x},\mathcal{B})$, etc..., then the expression for the Cartan tensor components is of the form
\begin{align}
C_{\mu\nu\rho}[g] = \,\left[2\,\dot{x}^\sigma\,\frac{\partial \phi_{\theta l\mathcal{A}\mathcal{B}...}}{\partial \chi}\,+\frac{\partial \phi_{\theta l\mathcal{A}\mathcal{B}...}}{\partial \Theta_\mathcal{A}}\,\mathcal{A}^\sigma+\,\frac{\partial \phi_{\theta l\mathcal{A}\mathcal{B}...}}{\partial \Theta_\mathcal{B}}\,\mathcal{B}^\sigma\,+...\right]\,
\,\eta_{\nu\rho}\,\eta_{\mu\sigma},
\label{CartantensorofLphi}
\end{align}
which is in general non-zero.

The above arguments provide a proof of the following
 \begin{theorem-A}
 Let $(M,\tilde{N},g)$ be a generalized Finsler spacetime such that its fundamental tensor is  given by \eqref{metricberwaldphi}  and such that
 \begin{align*}
 \left(\frac{\partial \phi}{\partial \chi}, \frac{\partial \phi}{\partial \mathcal{A}},\,\frac{\partial \phi}{\partial \mathcal{B}},...\right)\,\neq \left(0,0,0,...\right).
 \end{align*}
 Then the Cartan tensor \eqref{CartantensorofLphi} is non-zero and $(M,\tilde{N},g)$ is a non-Lorentzian singular generalized Berwald spacetime.
  \label{Theorem A}
  \end{theorem-A}

\subsection*{Some properties of the singular generalized Berwald spacetime \eqref{metricberwaldphi}}

 The relation \eqref{equationforgammas} implies that the metric structures $\eta$ and $g$ are {\it un-parameterized geodesic equivalent} and that they define the same affine geometries, since the Levi-Civita connection of $\eta$ and the linear Berwald connection of $g$ are equivalent connections. Therefore, in order to compare the generalized Finsler spacetime metric $g$ given by \eqref{metricberwaldphi} with  the Minkowski spacetime it is necessary to investigate  the corresponding chronometric properties.  In the present case such properties do not depend on the curvatures, since the metric $g$ is flat, but they exhibit intrinsic finslerian properties due to the dependence of the metric on the velocity tangent field $\dot{x}:I \to TM$. In particular, the metric $g$ can be non-reversible.

 Let us denote the null set or light cone of $g$ at $x$ by
 \begin{align*}
 ^g{Con}_x:=\{(x,\dot{x})\in T_x M\,s.t.\,L_g(x,\dot{x})=0\}
  \end{align*}
  and the light cone of $\eta$ at $x$ by
  \begin{align*}
  ^\eta{Con}_x=\{(x,\dot{x})\in T_x M\,s.t.\,\eta(\dot{x},\dot{x})=0\}.
  \end{align*}
   Then we have as a consequence  of \eqref{boundforphi} the following
\begin{proposicion}
 For the metric \eqref{metricberwaldphi}, the relation
 \begin{align*}
  ^g{Con}_x=\,^\eta{Con}_x
  \end{align*}
  holds good for each $x\in M$.
\end{proposicion}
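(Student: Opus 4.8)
The plan is to reduce the equality of the two null sets to the single defining inequality \eqref{boundforphi}. First I would substitute the ansatz \eqref{metricberwaldphi} directly into the definition of the Lagrangian \eqref{definicion de norma}. Because $\phi$ is a scalar multiplying $\eta_{\mu\nu}$, it factors out of the index contraction, so that
\begin{align*}
L_g(x,\dot{x})=\,g_{\mu\nu}(\dot{x})\,\dot{x}^\mu\dot{x}^\nu=\,\left(1+\phi\right)\,\eta_{\mu\nu}\,\dot{x}^\mu\dot{x}^\nu=\,\left(1+\phi\right)\,\eta(\dot{x},\dot{x}).
\end{align*}
This exhibits $L_g$ as a product of the scalar factor $1+\phi$ and the Minkowski quadratic form $\eta(\dot{x},\dot{x})$, which is the only structural fact the argument needs.

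Second, the crux is the positivity bound \eqref{boundforphi}, which is exactly the statement that $1+\phi>0$ at every $(x,\dot{x})\in\tilde{N}$. Since this factor never vanishes on the domain of regularity, the product $L_g(x,\dot{x})$ vanishes if and only if $\eta(\dot{x},\dot{x})=0$. Hence, restricted to $\tilde{N}_x$, the zero locus of $L_g$ coincides point for point with the zero locus of the Minkowski form, which is precisely the asserted equality $^g{Con}_x=\,^\eta{Con}_x$ for each $x\in M$.

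The step I expect to demand the most care is the interplay with the domain $\tilde{N}$, since $L_g$ is only defined where $g$ is regular, whereas $^\eta{Con}_x$ is \emph{a priori} a subset of all of $T_xM$. Here I would invoke the standing assumption, stated just before the proposition, that the light cones of $\eta$ are embedded in $\tilde{N}$; this guarantees that $^\eta{Con}_x$ lies inside the locus where $L_g$ is defined, so the set-theoretic equality is not vacuous. One should also check that the factor $1+\phi$ stays genuinely positive \emph{on} the light cone of $\eta$: each argument of $\phi$ in \eqref{verygeneralformoftheansatz} carries $\eta(\dot{x},\dot{x})$ in its numerator, so these arguments tend to zero as $\dot{x}$ approaches $^\eta{Con}_x$, and provided the timelike fields $\mathcal{A},\mathcal{B},\ldots$ are not $\eta$-orthogonal to the lightlike directions — so the denominators $\eta^2(\dot{x},\mathcal{A}),\ldots$ remain bounded away from zero — the function $\phi$ extends to a finite value there and \eqref{boundforphi} applies. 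With this regularity in hand the two hypersurfaces genuinely coincide, completing the proof.
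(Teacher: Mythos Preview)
Your argument is correct and matches the paper's approach: the proposition is stated there simply as ``a consequence of \eqref{boundforphi}'' with no further proof, and your factorisation $L_g=(1+\phi)\,\eta(\dot{x},\dot{x})$ together with $1+\phi>0$ is exactly the intended one-line justification. Your additional care about the domain $\tilde{N}$ and the behaviour of $\phi$ on $^\eta{Con}_x$ goes beyond what the paper makes explicit, but is well placed.
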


  Since \eqref{metricberwaldphi} corresponds to a generalized Berwald spacetime, the curvature endomorphisms \eqref{curvature endomorphism} are the only one that can be non-zero \cite{Bao2007}. However, as direct consequence of \eqref{equationforgammas} for the Minkowski metric $\eta$, the Riemannian curvature endomorphisms of the Berwald connection are also zero,
\begin{align}
R_g(X,Y)Z=\,R_\eta(X,Y)Z=0.
\end{align}
Hence the generalized Berwald metric \eqref{metricberwaldphi} is {\it Finsler flat}, that is, all its curvatures vanish, but it is in general different from the Minkowski metric $\eta$. This is a particular example of the Lorentzian version of a result of V. Matveev on geodesic rigidity \cite{Matveev2008}.

Let us remark that if the factor $\phi$ depends on the peculiarities of each individual  test particle, for instance, on the specific mass or charge, the WEP does not hold, although there are defined free falling local coordinates systems. This argument completes our previous claim that the conditions for WEP to hold and the existence of local free falling coordinate systems are logically independent.

\section{Non-flat generalized Berwald spacetimes}
 We can extend the methods and results of the previous {\it section} as follows. Let us consider a non-flat Lorentzian metric $h$.  Given a curve ${\bf x}:I\to M$, that for all effects will be either timelike or lightlike geodesic\footnote{The condition that ${\bf x}:I\to M$ is a geodesic can be eliminated for all practical purposes that follows in this section, since we can consider normal coordinates centered at $x$, that by Whitehead theorem exists for an affine connection \cite{Whitehead1932}.} respect to $h$, there are defined Fermi coordinates for $h$  along {\bf x}. In such Fermi coordinate system the Christoffel symbols of $h$ vanish, $^h\gamma^\mu\,_{\nu\rho}(x(t))=0$ and the metric takes the diagonal form
  \begin{align*}
  h(x(t))=\,diag(-1,1,1,1).
   \end{align*}
 The generalization from flat to non-flat spacetimes of the metric \eqref{metricberwaldphi}  is based on the  substitution of $\eta$ by $h$ in the ansatz for the fundamental tensor,
\begin{align}
g_{\mu\nu}(x,\dot{x})=\,\left(1+\,\phi(\dot{x})\right)\,h_{\mu\nu}(x).
\label{generalBerwaldtype}
\end{align}
Here $\phi$ is given as in \eqref{verygeneralformoftheansatz} but where $\eta$ is substituted by $h$.
   \begin{proposicion}
   If $(M,\tilde{N},g)$ is of the form \eqref{generalBerwaldtype} such that the $0$-homogeneous function $\phi:\tilde{N}\to \mathbb{R}^+$ is constant on $M$,
   \begin{align}
  \partial_\rho \phi(x,\dot{x})=0,\,\quad \,\forall\,\dot{x}\in\,\tilde{N}_x,
   \label{differential constraint on phi}
   \end{align}
   then  the linear Berwald connection of $g$ determines directly an affine connection in $M$.
   \end{proposicion}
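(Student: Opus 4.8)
The plan is to reduce the claim to the statement that the connection coefficients of $^b\nabla$, namely the Christoffel symbols $\gamma^\mu\,_{\nu\rho}[g]$ of \eqref{Christoffel symbols of g}, do not depend on the velocity coordinates $\dot{x}$. This is the operative characterization of a generalized Berwald spacetime used throughout the paper: once the $\gamma^\mu\,_{\nu\rho}[g]$ live on $M$, they coincide with the coefficients of an affine, symmetric connection on $M$ whose pull-back to $\pi^* TM$ is precisely $^b\nabla$. The observation that makes this tractable is that the ansatz \eqref{generalBerwaldtype} is a conformal rescaling of $h$ by a factor $1+\phi$ which, by the hypothesis \eqref{differential constraint on phi}, is constant along $M$. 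Since \eqref{boundforphi} guarantees $1+\phi>0$, the fundamental tensor is non-degenerate and its inverse is $g^{\mu\sigma}=(1+\phi)^{-1}h^{\mu\sigma}$.

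With this in hand I would compute $\gamma^\mu\,_{\nu\rho}[g]$ directly. The constraint \eqref{differential constraint on phi} states that $\phi$ is independent of the base point, so each spatial derivative in \eqref{Christoffel symbols of g} acts only on $h$,
\begin{align*}
\partial_\nu g_{\sigma\rho}=\,(1+\phi(\dot{x}))\,\partial_\nu h_{\sigma\rho}(x),
\end{align*}
and likewise for the remaining two derivative terms. Substituting these together with $g^{\mu\sigma}=(1+\phi)^{-1}h^{\mu\sigma}$ into \eqref{Christoffel symbols of g}, the factor $1+\phi$ produced by the three derivative terms cancels against the factor $(1+\phi)^{-1}$ coming from the inverse metric, and one is left with
\begin{align*}
\gamma^\mu\,_{\nu\rho}[g]=\,\frac{1}{2}\,h^{\mu\sigma}\left(\partial_\nu h_{\sigma\rho}+\partial_\rho h_{\nu\sigma}-\partial_\sigma h_{\nu\rho}\right)=\,\gamma^\mu\,_{\nu\rho}[h].
\end{align*}

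This exhibits the Christoffel symbols of $g$ as those of the Lorentzian metric $h$, which depend only on $x\in M$. Hence the coefficients of $^b\nabla$ are velocity-free and, exactly as in the flat computation leading to \eqref{equationforgammas}, $^b\nabla$ is the pull-back on $\pi^* TM$ of the Levi-Civita connection $^h\nabla$ of $h$, an affine, symmetric connection on $M$; this is the desired conclusion. The step I expect to carry the whole argument is the cancellation of the conformal factor, and it is here that the hypothesis \eqref{differential constraint on phi} is indispensable and sharp: if $\partial_\rho\phi$ were non-zero, the three derivative terms would each contribute an extra piece proportional to $h_{\sigma\rho}\partial_\nu\phi$, $h_{\nu\sigma}\partial_\rho\phi$ and $h_{\nu\rho}\partial_\sigma\phi$, which survive the division by $1+\phi$ and reintroduce a genuine $\dot{x}$-dependence through $\phi$ and its velocity derivatives, destroying the Berwald property. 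Thus the only delicate point is to verify that no such residual terms appear, which is immediate once \eqref{differential constraint on phi} is imposed.
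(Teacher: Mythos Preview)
Your proof is correct and, in fact, more direct than the paper's. The paper argues pointwise via Fermi coordinates: at each $x\in M$ it passes to a Fermi system for $h$ along a geodesic through $x$, observes that $\partial_\rho h_{\mu\nu}=0$ there, combines this with $\partial_\rho\phi=0$ to get $\partial_\rho g_{\mu\nu}=0$, and concludes that the Christoffel symbols of $g$ vanish at $x$ in that chart and hence are functions of $x$ alone in any chart. You instead compute $\gamma^\mu{}_{\nu\rho}[g]$ directly in an arbitrary coordinate system and watch the conformal factor cancel, obtaining $\gamma^\mu{}_{\nu\rho}[g]=\gamma^\mu{}_{\nu\rho}[h]$ in one stroke. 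This not only establishes the Proposition but simultaneously yields the identity $\gamma^\mu{}_{\nu\rho}[g]=\gamma^\mu{}_{\nu\rho}[h]$ that the paper states separately as the subsequent Corollary; the paper's route, by contrast, keeps the existence of Fermi coordinates in the foreground, which ties the argument back to the free-falling-frame narrative of Section~3 but is logically unnecessary for the bare claim.
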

   \begin{proof}
   Let us consider a fixed point $x\in \,M$. For a local Fermi coordinate with starting point at $x$ along the world line ${\bf x}:I\to M$, one has that $\gamma^\mu\,_{\nu\rho}[h]=0$. Hence it also holds that at $x\in \,M$
   \begin{align*}
   \partial_\rho \,h_{\mu\nu}=0,\quad \mu,\nu,\rho=0,1,2,3.
   \end{align*}
   Since $\partial_\rho \phi=0$, then one has  in the local Fermi coordinates of $h$
   \begin{align*}
   \partial_\rho \,g_{\mu\nu}(x,\dot{x})=0,\,\quad \mu,\nu,\rho=0,1,2,3
   \end{align*}
   along the geodesic ${\bf x}:I\to M$.
   Therefore, the Christoffel symbols of \eqref{generalBerwaldtype} are zero in such coordinate system and they can only depend on the coordinates of the point $x$ in any other coordinate system.
   \end{proof}
   \begin{corolario}
If $(M,\tilde{N},g)$ is a generalized Finsler spacetime such that the fundamental tensor is of the form \eqref{generalBerwaldtype} and the $0$-homogeneous function $\phi:\tilde{N}\to \mathbb{R}^+$ is such that the condition \eqref{differential constraint on phi} holds, then the connection coefficients of the Levi-Civita connection of $h$ coincide with the connection coefficients of the linear Berwald connection of $g$.
   \end{corolario}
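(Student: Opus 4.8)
The plan is to compute the Christoffel symbols $\gamma^\mu\,_{\nu\rho}[g]$ of the fundamental tensor \eqref{generalBerwaldtype} directly from the defining formula \eqref{Christoffel symbols of g} and to verify that they reduce identically to the Christoffel symbols $\gamma^\mu\,_{\nu\rho}[h]$ of the Lorentzian metric $h$, which (since $h$ is Lorentzian) are the coefficients of its Levi-Civita connection $^h\nabla$. The preceding Proposition already guarantees that, under the constraint \eqref{differential constraint on phi}, these coefficients descend to $M$ and hence define an affine connection; what remains is to identify that affine connection explicitly. The cleanest route is a direct computation valid in an arbitrary chart, rather than the Fermi-coordinate argument of the Proposition.

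First I would record two algebraic facts about \eqref{generalBerwaldtype}. Writing $\Omega:=1+\phi(\dot{x})$, the positivity $\phi>0$ (so that $\Omega\neq 0$) together with the non-degeneracy of $h$ gives that $g_{\mu\nu}=\Omega\,h_{\mu\nu}$ is non-degenerate with inverse $g^{\mu\sigma}=\Omega^{-1}h^{\mu\sigma}$. Next, since $\phi$ depends only on the velocity coordinates and \eqref{differential constraint on phi} states that $\partial_\rho\phi=0$, the spacetime derivatives of $g$ factor through $h$ alone, e.g.\ $\partial_\nu g_{\sigma\rho}=\Omega\,\partial_\nu h_{\sigma\rho}$, and similarly for the remaining two derivative terms appearing in \eqref{Christoffel symbols of g}.

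Substituting these into \eqref{Christoffel symbols of g} yields
\[
\gamma^\mu\,_{\nu\rho}[g]=\frac{1}{2}\,\Omega^{-1}h^{\mu\sigma}\,\Omega\left(\partial_\nu h_{\sigma\rho}+\partial_\rho h_{\nu\sigma}-\partial_\sigma h_{\nu\rho}\right)=\gamma^\mu\,_{\nu\rho}[h],
\]
since the factor $\Omega$ cancels against $\Omega^{-1}$. This is the entire content of the statement: the factor relating $g$ and $h$ is constant along $M$ by \eqref{differential constraint on phi}, so none of the usual conformal-rescaling correction terms---which would carry $\partial_\rho\phi$---survive, and the two connections share identical coefficients in every chart.

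The step I expect to require the most care is the bookkeeping that the derivatives in \eqref{Christoffel symbols of g} are the horizontal derivatives $\partial/\partial x^\rho$ taken at fixed $\dot{x}$, so that \eqref{differential constraint on phi} genuinely annihilates every $\phi$-dependent contribution; once this is granted the cancellation is immediate and no further estimates are needed. It is worth noting that this computation both recovers and sharpens the Proposition: in a Fermi chart of $h$ along the geodesic one has $\gamma^\mu\,_{\nu\rho}[h]=0$, hence $\gamma^\mu\,_{\nu\rho}[g]=0$ there, in agreement with the affine character found previously, while the present argument additionally pins down the resulting affine connection as $^h\nabla$ itself.
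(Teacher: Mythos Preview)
Your argument is correct and takes a genuinely different route from the paper's own proof. The paper argues via Fermi coordinates of $h$: at each point one chooses a chart in which $\partial_\rho h_{\mu\nu}=0$, observes that then $\partial_\rho g_{\mu\nu}=0$ as well (by \eqref{differential constraint on phi}), so both sets of Christoffel symbols vanish at that point in that chart; since the difference of two connection coefficient arrays transforms tensorially, the equality $\gamma^\sigma_{\mu\nu}[h]=\gamma^\sigma_{\mu\nu}[g]$ then propagates to every chart. Your approach instead computes $\gamma^\mu\,_{\nu\rho}[g]$ directly from \eqref{Christoffel symbols of g} in an arbitrary chart, using $g^{\mu\sigma}=\Omega^{-1}h^{\mu\sigma}$ and $\partial_\nu g_{\sigma\rho}=\Omega\,\partial_\nu h_{\sigma\rho}$, and watches the conformal factor cancel.

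What each approach buys: the paper's Fermi-coordinate argument mirrors the preceding Proposition and keeps the narrative thread of ``pass to a chart where everything vanishes,'' but it relies implicitly on the tensoriality of the difference of connections, which is not spelled out. Your direct computation is self-contained, coordinate-general from the start, and makes transparent \emph{why} the usual conformal correction terms are absent---they would all carry $\partial_\rho\phi$, which \eqref{differential constraint on phi} kills. Your remark that the derivatives in \eqref{Christoffel symbols of g} are horizontal ($\partial/\partial x^\rho$ at fixed $\dot{x}$) is exactly the care needed, and your closing observation that specializing to a Fermi chart recovers the Proposition is a nice consistency check.
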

   \begin{proof}
   Note that if in Fermi coordinates $\partial_\rho \,h_{\mu\nu}=0$, then $\partial_\rho\,g_{\mu\nu}=0$ also holds. It follows that since this holds for any $\rho, \mu,\nu=0,1,2,3$, it also holds  in Fermi coordinates the relation
   \begin{align*}
   h^{\sigma\rho}\,\partial_\rho \,h_{\mu\nu}=\,g^{\sigma\rho}\,\partial_\rho \,g_{\mu\nu}=0,\quad \sigma,\mu,\nu=0,1,2,3,
   \end{align*}
   from which it follows that the equality of the derivatives must hold in any local coordinate system on $M$ and as consequence $\gamma^\sigma_{\mu\nu}[h]=\,\gamma^\sigma_{\mu\nu}[g]$ in any local coordinate system.
   \end{proof}
Then we have proved the following
  \begin{theorem-B}
If for the generalized Finsler spacetime $(M,\tilde{N},g)$, the fundamental tensor is of the form \eqref{generalBerwaldtype} and for the $0$-homogeneous function $\phi:\tilde{N}\to \mathbb{R}^+$ the condition \eqref{differential constraint on phi} holds good, then $(M,\tilde{N},g)$  is a (singular) generalized Berwald spacetime.
  \label{Theorem B}
\end{theorem-B}

\begin{corolario}
With the same hypothesis than in {\bf Theorem B},
for the metric \eqref{generalBerwaldtype}, it holds that  $^gCon_x=\,^hCon_x$.
\end{corolario}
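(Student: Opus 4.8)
The plan is to observe that the fundamental tensor \eqref{generalBerwaldtype} is a pointwise conformal rescaling of the Lorentzian metric $h$ by the strictly positive scalar factor $1+\phi$, so that the two associated Lagrangians differ only by this factor and hence share the same zero locus. This is the non-flat analogue of the Proposition preceding {\bf Theorem A}, and the argument is essentially the same; in particular it does not require the differential constraint \eqref{differential constraint on phi}, which is needed only to secure the Berwald property in {\bf Theorem B} and not the coincidence of the light cones.

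First I would compute the Lagrangian of $g$ directly from its definition \eqref{definicion de norma}. Using \eqref{generalBerwaldtype} and the fact that $\phi$ depends only on the velocity coordinates, one obtains
\begin{align*}
L_g(x,\dot{x})=\,g_{\mu\nu}(x,\dot{x})\,\dot{x}^\mu\dot{x}^\nu=\,\left(1+\phi(\dot{x})\right)\,h_{\mu\nu}(x)\,\dot{x}^\mu\dot{x}^\nu=\,\left(1+\phi(\dot{x})\right)\,h(\dot{x},\dot{x}).
\end{align*}
Since $\phi:\tilde{N}\to\mathbb{R}^+$ takes non-negative values, the factor $1+\phi$ satisfies $1+\phi\geq 1>0$ at every point of $\tilde{N}$, which is the non-flat counterpart of the positivity condition \eqref{boundforphi}. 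Therefore, for each $(x,\dot{x})\in\tilde{N}$, the equation $L_g(x,\dot{x})=0$ holds if and only if $h(\dot{x},\dot{x})=0$, and this is precisely the assertion that $^gCon_x=\,^hCon_x$.

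The only point requiring some care — and the closest thing to an obstacle, which is of a domain-theoretic rather than computational nature — is to ensure that the null cone of $h$ at each $x\in M$ actually lies inside the regularity domain $\tilde{N}_x$, so that it is not excised by the singular set $S$ and the comparison of the two loci is well posed. This is guaranteed by the same embedding assumption adopted in the flat case, namely that the light cones of the background Lorentzian metric are contained in $\tilde{N}$. Granted this, the conformal factorization displayed above makes the identity of the two null structures immediate, and the corollary follows.
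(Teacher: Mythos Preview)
Your argument is correct and is precisely the one the paper intends: the corollary is stated without proof, but it is the direct non-flat analogue of the Proposition preceding {\bf Theorem A}, which the paper obtains ``as a consequence of \eqref{boundforphi}''---that is, from the strict positivity of the conformal factor $1+\phi$, exactly as you do. Your additional remarks that the constraint \eqref{differential constraint on phi} is not actually needed here, and that one must assume the $h$-null cone lies in $\tilde{N}_x$, are accurate refinements of what the paper leaves implicit.
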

\begin{ejemplo}
 As an example, let us consider $(M,h)$ to be a Robertson-Walker spacetime,
 \begin{align}
 ds^2=\,-dt^2+\,a^2(t)\left(\frac{d{r}^2}{1-\epsilon \,r^2}+r^2(\sin^2\theta d{\phi}^2+d{\theta}^2)\right).
 \label{Robertson-Walker metric}
 \end{align}
 Then $\phi$ is assumed to be a real factor such that
 \begin{align*}
 \phi=\,\exp\varphi\left(\frac{\dot{r}^2}{\dot{t}^2}\right)-1> 0.
 \end{align*}
 It is direct that $\phi$ defined in this way is $0$-homogeneous and constant on $M$. It can be re-written covariantly if we consider the vector fields
 \begin{align*}
 \mathcal{A}=\,\frac{1-\epsilon r^2}{a^2(t)}\,\frac{\partial}{\partial r},\quad \mathcal{B}=\,\frac{\partial}{\partial t},
 \end{align*}
 in which case
 \begin{align*}
 \phi(\dot{x})=\,\exp\varphi\left(\frac{h^2(\dot{x},\mathcal{A})}{h^2(\dot{x},\mathcal{B})}\right)-1,
 \end{align*}
 where $h$ is the Robertson-Walker metric \eqref{Robertson-Walker metric}.  Here the vector field $\mathcal{B}$ is the inherent time orientation of the model, which is globally defined. The vector field $\mathcal{A}$ appears as a time dependent force field. In the case when $\epsilon>0$, the force is decreasing with $r$, achieving the maximal intensity at $r=0$ and the minimal intensity equal to zero in the boundary $r=\epsilon^{-1/2}$. Note that in this example $\mathcal{A}$ is spacelike, instead than timelike. However, this does not affect the issue of the singularities, which are found along spacelike vectors only.

 The {\it deformed Robertson-Walker metric} is then
 \begin{align}
g(x,dot{x})=\,\exp\varphi\left(\frac{h^2(\dot{x},\mathcal{A})}{h^2(\dot{x},\mathcal{B})}\right)
 \left(-dt^2+\,a^2(t)\left(\frac{d{r}^2}{1-\epsilon \,r^2}+r^2(\sin^2\theta d{\phi}^2+d{\theta}^2)\right)\right).
 \label{deformed Robertson-Walker spacetime}
 \end{align}
 Note that the denominator in  the argument of $\varphi$ is non-zero for each timelike vector $\dot{x}\in T_x M$. It is direct that the condition \eqref{differential constraint on phi} holds good, despite the fact that the vector field $\mathcal{A}$ is not constant on the cosmic time $t$. The red-shift associated to $g$ coincides with the red shift of $h$. However, the analogous property holds for the associated  Hubble law and deceleration parameter, which  inherit a non-trivial anisotropic perturbation due to the anisotropy of the fundamental tensor.
 \end{ejemplo}
\subsection*{The Einstein field equations for a generalized Berwald spacetime of the type \eqref{generalBerwaldtype}} An important advantage of
generalized Berwald spacetimes respect to generic generalized Finsler spacetimes is that the only non zero curvature endomorphisms of the Berwald connection $^b\nabla$ are the Riemannian type endomorphisms. This will imply that the field equations for a generalized Berwald spacetime can be constructed using the Riemannian curvatures and in close analogy with the field equations of general relativity. Let us discuss such equations in what follows.
\begin{proposicion}
For the metric \eqref{generalBerwaldtype} and for any three vector fields $X,Y,Z\in \,\Gamma \tilde{N}$ the relation
\begin{align*}
R_g(X,Y)Z=\,R_h(X,Y)Z
\end{align*}
between the curvature endomorphisms of $g$ and $h$ holds good.
\label{relationbetweenendomorphisms}
 \end{proposicion}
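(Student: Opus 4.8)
The plan is to reduce the claim to an equality of affine connections on $M$, after which the coincidence of the curvature endomorphisms is immediate, since the Riemannian curvature \eqref{curvature endomorphism} is built from the connection alone. Most of the work is already contained in the Corollary asserting that the connection coefficients of the Levi-Civita connection of $h$ coincide with those of the linear Berwald connection of $g$, so the argument is short.

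First I would invoke \textbf{Theorem B}: since the fundamental tensor has the form \eqref{generalBerwaldtype} and $\phi$ satisfies \eqref{differential constraint on phi}, the triplet $(M,\tilde{N},g)$ is a generalized Berwald spacetime. Hence $^b\nabla$ is the pull-back to $\pi^* TM$ of an affine connection on $M$ whose coefficients are the Christoffel symbols $\gamma^\mu{}_{\nu\rho}[g]$, which depend only on $x\in M$. By the Corollary quoted above one has $\gamma^\mu{}_{\nu\rho}[g]=\gamma^\mu{}_{\nu\rho}[h]$ in every local coordinate system on $M$. Two affine connections on $M$ having identical coefficients in every chart are the same connection, so the affine connection underlying $^b\nabla$ equals the Levi-Civita connection $^h\nabla$ of $h$.

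Next I would use the coordinate expression of the curvature endomorphism \eqref{curvature endomorphism},
\begin{align*}
R^\mu{}_{\nu\rho\sigma}=\,\partial_\rho \gamma^\mu{}_{\sigma\nu}-\partial_\sigma \gamma^\mu{}_{\rho\nu}+\gamma^\mu{}_{\rho\lambda}\gamma^\lambda{}_{\sigma\nu}-\gamma^\mu{}_{\sigma\lambda}\gamma^\lambda{}_{\rho\nu},
\end{align*}
which depends solely on the connection coefficients and their first $x$-derivatives. Because the Berwald coefficients live on $M$, this expression is velocity-independent and is precisely the Riemann tensor of the underlying affine connection. Substituting $\gamma^\mu{}_{\nu\rho}[g]=\gamma^\mu{}_{\nu\rho}[h]$ yields $R^\mu{}_{\nu\rho\sigma}[g]=R^\mu{}_{\nu\rho\sigma}[h]$ identically, and contracting with arbitrary $X,Y,Z\in\Gamma\tilde{N}$ after pulling back to $\pi^* TM$ gives $R_g(X,Y)Z=R_h(X,Y)Z$.

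There is no genuine obstacle once the Corollary supplies the pointwise equality of Christoffel symbols; everything reduces to the elementary fact that curvature is a function of the connection. The only point deserving care is the passage between the curvature of the pull-back Berwald connection on $\pi^* TM$ and the curvature of the underlying affine connection on $M$, and this is exactly where the Berwald hypothesis (coefficients independent of $\dot{x}$) enters, ensuring that $R_g$ descends to the base and coincides there with the classical Riemann endomorphism $R_h$.
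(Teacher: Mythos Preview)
Your argument is correct and follows essentially the same route as the paper's own proof: both reduce the claim to the fact that the Levi-Civita connection of $h$ and the linear Berwald connection of $g$ coincide (via the Corollary), whence the curvature endomorphisms, being built from the connection alone, agree. Your version is simply more explicit about the coordinate computation and the identification between the pull-back connection on $\pi^*TM$ and the underlying affine connection on $M$, a point the paper handles in one phrase.
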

 \begin{proof}
Since the Levi-Civita connection of $h$ coincides with the linear Berwald connection of $g$, after an identification of the fibers and bundles where they operate, the corresponding curvature endomorphisms are identical in the same sense.
 \end{proof}
 The Ricci tensor of $g$ is defined in close analogy with the Lorentzian case, as the trace of the Riemann curvature endormorphism operator $R_g(\cdot,\cdot)$. Then it is easy to show that the Ricci tensor of $g$ and $h$ coincide,
 \begin{proposicion}
For the metric \eqref{generalBerwaldtype} and given two arbitrary vector fields $X,Y\in\,\Gamma \tilde{N}$, the relation
\begin{align*}
Tr(R_g(X,Y))=\,Tr(R_h(X,Y))
\end{align*}
between the Ricci tensors of $g$ and $h$ holds good.
\label{relationbetweenRiccitensors}
 \end{proposicion}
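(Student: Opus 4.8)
The plan is to reduce the statement to Proposition \ref{relationbetweenendomorphisms} together with the elementary fact that the trace of a vector-space endomorphism is intrinsic, that is, it requires no auxiliary metric to be defined. No computation with Christoffel symbols or curvature components should be needed.

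First I would recall that, for each fixed pair $X,Y\in\Gamma\tilde{N}$, the object $R_g(X,Y)$ appearing in \eqref{curvature endomorphism} is a fibre-wise linear endomorphism of the pull-back bundle $\pi^*TM$, namely the map $Z\mapsto R_g(X,Y)Z$. By Proposition \ref{relationbetweenendomorphisms}, after the canonical identification of the bundles and fibres on which the two connections act, this endomorphism coincides with $R_h(X,Y)$ as one and the same linear map on each fibre. I would then invoke that the trace of a linear endomorphism $A$ of a finite-dimensional vector space is an intrinsic invariant: it is the contraction of one contravariant with one covariant index, equal to $A^\mu{}_\mu$ in any basis and independent of any inner product on the fibre. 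Since $R_g(X,Y)$ and $R_h(X,Y)$ are literally the same endomorphism, their traces are the same number, and hence $Tr(R_g(X,Y))=Tr(R_h(X,Y))$ at every point, which is the claimed identity of the Ricci tensors.

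The only point that requires care, and the place where a careless argument would go wrong, is the metric-independence of the contraction. One must read the definition adopted here, the Ricci tensor as the trace of the endomorphism operator $R_g(\cdot,\cdot)$, as the intrinsic $(1,1)$-tensor contraction and not as a contraction that raises or lowers an index with $g^{\mu\nu}$. Were the latter used, the conformal factor $(1+\phi)$ would enter and the two Ricci tensors would differ precisely by that factor. Because the endomorphism trace uses no metric, no such factor appears, and the result is an immediate consequence of Proposition \ref{relationbetweenendomorphisms}. I therefore expect no genuine obstacle: once the identification of fibres underlying Proposition \ref{relationbetweenendomorphisms} is granted, the proposition follows in one line from the linearity and basis-independence of the trace.
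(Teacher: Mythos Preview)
Your proposal is correct and matches the paper's approach. The paper does not supply a separate proof for this proposition; it simply remarks that the Ricci tensor is defined as the trace of the curvature endomorphism and that the equality is then easy, leaving it as an immediate consequence of Proposition~\ref{relationbetweenendomorphisms}, which is exactly the argument you give---your additional care in emphasising that the endomorphism trace uses no metric is a helpful clarification that the paper leaves implicit.
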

 As a consequence of {\it Proposition} \ref{relationbetweenendomorphisms} and {\it Proposition} \ref{relationbetweenRiccitensors} the Einstein tensor of $g$ and the Einstein tensor of $h$ coincide,
 \begin{theorem-C}
 For the metric \eqref{generalBerwaldtype} it holds that
 \begin{align}
 ^gR_{\mu\nu}-\,\frac{1}{2}\,R_g\,g_{\mu\nu}=\,^hR_{\mu\nu}-\,\frac{1}{2}\,R_h\,h_{\mu\nu}.
 \label{relationbetweenEinsteintensors}
 \end{align}
 \label{Theorem C}
 \end{theorem-C}
 \begin{proof}
By {\it Proposition} \ref{relationbetweenRiccitensors}, it holds that $^gR_{\mu\nu}=\,^hR_{\mu\nu}$. Moreover,
 \begin{align*}
 R_g=g^{\mu\nu}\,\,^gR_{\mu\nu}=\,(1+\phi)^{-1}\,h^{\mu\nu}\,\,^hR_{\mu\nu}=\,(1+\phi)^{-1}\,R_h,
 \end{align*}
 that directly leads to the result.
 \end{proof}
 Therefore, if $h$ is a solution of the Einstein's vacuum equations, then $g$ is also a solution of the corresponding equations.
 This implies that from the point of view of the description of the gravitational interaction, the Berwald metric determined by $g$ and $h$ are equivalent in vacuum. It also shows that the corresponding Einstein equations for $g$ and for $h$ in presence of matter must be equivalent, because necessarily the left hand side of the Einstein equation for $g$ is the same than the left hand side of the Einstein equation for $h$. Therefore, the right hand sides of such equations must be equivalent. Then we can write the field equations of $g$ to be a direct generalization of Einstein's equations,
 \begin{align}
  ^gR_{\mu\nu}-\,\frac{1}{2}\,R_g\,g_{\mu\nu}=\,8\,\pi\,G\,T_{\mu\nu}.
  \label{field equations}
 \end{align}
 The stress-energy tensor $T_{\mu\nu}$ is determined  by matter fields $\psi_A:M\to \mathcal{E}$, the metric $g$ and by the Berwald connection of $g$.  However, it must be defined over the manifold $M$. Furthermore, it follows
 that the equation \eqref{field equations} is covariantly consistent,
 \begin{align*}
 ^h\nabla\,\left(\,^gR_{\mu\nu}-\,\frac{1}{2}\,R_g\,g_{\mu\nu}\right)=\,^h\nabla\,\left(\,^hR_{\mu\nu}-\,\frac{1}{2}\,R_h\,h_{\mu\nu}\right) =\,^h\nabla\,T_{\mu\nu}=0.
 \end{align*}
 Moreover, for a generalized Berwald spacetime and after the identification of $^b\nabla$ with the associated affine connection, one has that $^h\nabla$ is equivalent to $^g\nabla$. Then we can write
 \begin{align}
 ^g\nabla\,\left(\,^gR_{\mu\nu}-\,\frac{1}{2}\,R_g\,g_{\mu\nu}\right)=\,^g\nabla\,T_{\mu\nu}=0
 \end{align}

 The above argument experiments a change if a cosmological constant term is allowed in the field equations. In this case, we have the relation
 \begin{align}
^hR_{\mu\nu}-\,\frac{1}{2}\,R_h\,h_{\mu\nu}\,+\Lambda\,h_{\mu\nu}=\, ^gR_{\mu\nu}-\,\frac{1}{2}\,R_g\,g_{\mu\nu}\,+(1+\phi)^{-1}\,\Lambda\,g_{\mu\nu}.
 \label{relationbetweenEinsteintensors with cosmological constant}
 \end{align}
 The theory appears to be non-consistent, contrary  to the  case $\Lambda=0$. A possible resolution of this problem is considered in the following lines. We start with the remark that by consistency with experience, the condition $\phi(\dot{x})\ll 1$ must hold. If in addition the cosmological constant $\Lambda$ is assumed to be small respect to an intrinsic radius of the model, for instance, the inverse of the Ricci scalar, then in the expression
 \begin{align*}
 (1+\phi)^{-1}\,\Lambda =\,\Lambda\,-\phi\,\Lambda\,+\frac{\phi^2}{2}\,\Lambda-...
 \end{align*}
 the leading order term is the first one and coincides with a cosmological term in the field equations. Therefore, if the cosmological constant is non-zero but it is a very small parameter compared with other physical contributions to the structure of the spacetime, then the Einstein equations with cosmological constant for $g$ are consistent at leading order in the product $\phi^a\,\Lambda,\,a=0,1,2,...$.

\begin{ejemplo}
If we consider the example given by the spacetime \eqref{deformed Robertson-Walker spacetime}, one has the expression
 \begin{align*}
 (1+\phi)^{-1}\,\Lambda\sim\,\Lambda -\,\varphi\left(\frac{\dot{r}^2}{\dot{t}^2}\right)\,\Lambda+...,
 \end{align*}
where the higher order contributions in $\varphi$ are small compared with $1$ in the open cone $\dot{r}^2/{\dot{t}^2}<1$. A natural way to settle a small value for $\varphi$ is to introduce another length scale, apart from the curvature $\epsilon\neq 0$ inherent from the underlying Robertson-Walker model. The additional scale $\l$ could be associated with a microscopic scale and $\varphi$ can be considered to be an analytical function of the form $\varphi(\l^2\epsilon\,\dot{r}^2/{\dot{t}^2})$. In this case, the series development is of the form
 \begin{align}
 \left(1+\varphi\left(\l^2\epsilon\,\frac{\dot{r}^2}{\dot{t}^2}\right)\right)^{-1}\,\Lambda\sim\,\Lambda -\,\varphi_1\,\frac{\dot{r}^2}{\dot{t}^2}\,\l^2\epsilon\,\,\Lambda+...,
 \label{development varphi}
 \end{align}
 where $\varphi_1$ is a constant with dimensions of speed to the inverse square. However, note that this the simplest ansatz and that others are theoretically allowed. Assuming the  development \eqref{development varphi}, we can re-cast $\varphi_1\sim c^{-2}=1$, where $c$ is the speed of light in vacuum. Moreover, if $\l$ is of order of the Planck length, the product $\l^2\,\epsilon \sim 10^{-124}$, which makes higher contributions on the series \eqref{development varphi} small, if the constant coefficients $\varphi_1$ is of order $1$ and $\Lambda$ small.

 Let us remark that the relevant condition in the above argument is $\l^2\,\epsilon\ll 1$. This condition can be interpreted as the existence of two very different scales that correspond to the  minimal and maximal scales where the continuous generalized Berwald spacetime is applicable as an effective model.
 \end{ejemplo}

 \section{Final remarks}
 From the whole category of generalized Finsler spacetimes, the general physical requirement of existence of smooth  free falling local coordinate systems selects a specific type of structures, namely, generalized Berwald spacetimes. We have considered first a specific family of flat generalized Berwald spacetimes. Our models provide examples that can describe massive and massless world lines and in this sense they can be considered complete. They have certain similarities with the model investigated by Miron-Tavakol metric \cite{Bucataru Miron, Miron Tavakol 1994}, but our models differ from Miron-Tavakol metrics in two fundamental aspects. The first is that our metric \eqref{generalBerwaldtype} is homogeneous in the velocity coordinates. The second is that it is of Berwald type and therefore, the Berwald connection is affine. Since the Miron-Tavakol metric satisfies the Ehlers-Pirani-Schild {\it axioms} \cite{Ehlers Pirani Schild}, the same is to be expected for our family of metrics, since by the properties mentioned above one can say that \eqref{generalBerwaldtype} is {\it more Lorentzian} than the Miron-Tavakol metric.

 It was shown that for the metrics considered in {\bf Theorem B}, the Einstein field equations for $g$ follow directly from the Einstein field equations for $h$. This result holds in both situations, in vacuum and in presence of matter fields in the case when $h$ is a solution of the Einstein field equations. The result is based on the fact that for the family of metrics considered, the Einstein tensor of $g$ coincides with the Einstein tensor of $h$ and as a consequence, the matter stress-energy tensor must live on $M$, even if $g$ has a non-trivial dependence on the velocity coordinates. Thus although the stress-energy tensor must be constructed using the $d$-tensor $g$ and the associated covariant derivative $^b\nabla$, it must be defined also on the spacetime manifold $M$.

 The extension of this result to the situation when the cosmological constant is non-zero is not immediate. Consistency on the zero covariant divergence of the right hand side implies that the left hand side of the equations of motion must also have zero covariant derivative. This is a non-trivial requirement, that can be satisfied if one sees the constancy of $\Lambda$ as an approximation to the more general situation where the right hand side and the cosmological term itself are non-local. This solution is, however, not strictly necessary, since one can argued that the model is consistent if the term of the cosmological constant is small compared with the other geometric terms arising in the field equations. An specific example, based on the deformed Robertson-Walker spacetime, suggests that  in the models investigated in this paper as effective description of spacetimes the cosmological constant term should be small, in order to be consistent with the zero covariant condition of the stress energy tensor.

The above arguments suggest  that $g$ and $h$ are physically equivalent. However,  the Cartan tensors  of $g$ and $h$ are different. This fact has consequences for the chronometric properties based on $g$ and $h$.  Such effects should be detectable by  measuring anisotropic effects in the measure of proper time. Obviously, this cast strong constraints on the level of local anisotropy described by our models.

 Several important questions remain open in the theory sketched in this paper. Probably the most urgent one is to understand the nature of the factor $\phi$. If the field equations for $g$ and $h$ are formally the same, then the only data that can constrain $\phi$ are the boundary conditions for $g$. In particular, generalized Berwald spacetimes on the product manifold   of the type \eqref{generalBerwaldtype} can be defined on product manifolds $\bar{M}_3\times \mathbb{R}$. The  asymptotically condition on $\bar{M}_3$ induced from the Berwald structure on $M$  is analogous to the {\it asymptotic Euclidean Riemannian manifold} and the generalized Finsler metric on $\bar{M}_3$ is, in suitable coordinates, of the form
\begin{align*}
g_{ij}(x,\dot{x})=\,\left(1+\phi(\dot{x^1},\dot{x^2},\dot{x^3})\right)\,\delta_{ij}+\mathcal{O}(r^{-1}),\quad i,j=1,2,3.
\end{align*}
 This argument suggests that the physical significance of the scalar factor $\phi$ can be reduced to the investigation of  flat generalized Berwald spacetime case.

Additional assumptions or hypotheses on the nature of the microscopic structure of the spacetime must be  adopted. This has been made explicit in our example of deformed Robertson-Walker spacetime \eqref{deformed Robertson-Walker spacetime} together with our discussion of the cosmological constant term, where we suggested that one can link the cosmological term appearing in the field equations with the factor $\varphi$, if a microscopic  scale $\l$ is introduced. This new scale may be related with the microscopic limits of applicability of the flat Berwald spacetime model. In resume, in this scenario, three scales appear: $\l$, $\epsilon$ and $\Lambda$. The suggestion that $\Lambda$ could be related with $\l$ and $\epsilon$ reduces the number of scales to two (just $\l$ and $\epsilon$). Finally, let us remark that the nature of $\l$ should be specified from the specific microscopic theory of spacetime.

\subsection*{Acknowledgements.} We thank A. Fuster for several comments on previous versions of this paper. We also thanks V. Perlick for the suggestion leading to explore the Einstein tensor of $g$ and for several other relevant comments. This work was supported by PNPD-CAPES n. 2265/2011, Brazil.

\footnotesize{
}

\end{document}